\documentclass[authoryear,preprint]{elsarticle}
\usepackage{amssymb,amsmath} 
\usepackage[breaklinks]{hyperref} 
\usepackage{tensor} 
\usepackage{natbib}
\setcitestyle{aysep={}} 

\newcommand{\HH}{\mathcal{H}}
\newcommand{\PP}{\mathcal{P}}
\newcommand{\UU}{\mathcal{U}}
\newcommand{\RR}{\mathbb{R}}
\newcommand{\Jab}{\tensor{J}{^a_b}}
\newcommand{\Inn}[2]{\langle #1, #2 \rangle}

\newtheorem{prop}{Proposition}
\newtheorem{lemm}{Lemma}
\newtheorem{coro}{Corollary}
\newdefinition{defi}{Definition}
\newproof{proof}{Proof}


\begin{document}
\begin{frontmatter}
\title{A general perspective on time observables}
\author[bwr]{Bryan W. Roberts}
\ead{b.w.roberts@lse.ac.uk}
\address[bwr]{Department of Philosophy, Logic and Scientific Method, London School of Economics and Political Science, Houghton Street, London, WC2A 2AE}
\begin{abstract}
I propose a general geometric framework in which to discuss the existence of time observables. This frameworks allows one to describe a local sense in which time observables always exist, and a global sense in which they can sometimes exist subject to a restriction on the vector fields that they generate. Pauli's prohibition on quantum time observables is derived as a corollary to this result. I will then discuss how time observables can be regained in modest extensions of quantum theory beyond its standard formulation. 
\end{abstract}
\end{frontmatter}	

\section{Introduction}\label{sec:introduction}

The characterization of time as a measurable quantity or ``observable'' has been the subject of much discussion in the foundations of physics. In a famous remark in his textbook on quantum theory, \citet[p.63]{pauli-qm} argued that quantum theory lacks the capacity to describe such an observable. Subsequent work clarified the precise mathematical results underlying this remark. In quantum theory, there is a certain plausible property of clocks and timers, which I will define precisely below and refer to as ``timeliness,'' which is incompatible with the assumption that the Hamiltonian is half-bounded \citep{srinivas1981time}. Since most known quantum systems are described by a Hamiltonian that is bounded from below, it follows that the standard description a quantum system prohibits this property of clocks. This prohibition on quantum clocks is sometimes referred to as \emph{Pauli's theorem}.

I would like to propose a more general and I think positive perspective on this result. My approach will be to frame the question in the general geometric language of dynamical systems. This language will be familiar as one in which classical Hamiltonian mechanics is often framed. However, we will see that with a little extra structure it is enough to capture quantum theory too, as well as some extensions of quantum theory beyond its orthodox formulation. My aim will be to start determining the lay of the land for time observables in this general framework. I will first illustrate a local sense in which time observables generically exist, and discuss the extent to which they are unique. Next I will show that time observables can also exist globally, subject to a constraint on the vector fields they generate. Pauli's theorem can then be viewed as as a corollary to this result; to make it precise, I will adopt the geometric language for quantum mechanics developed by \citet{AshtekSchill1999a}.

My central conclusion will be that Pauli's prohibition on time observables is not as severe as it may seem. Viewed from a more general perspective, time observables can be regained not only in classical mechanics, but also in many modest extensions of quantum theory that go just a bit beyond its usual formulation.

\section{Timely observables in general dynamical systems}\label{sec:time-observables-and-existence}

The dynamics of a physical system will be described by a triple $(\PP,\Omega_{ab},h)$, where $\PP$ is a smooth manifold, $\Omega_{ab}$ is a symplectic form, and $h:\PP\rightarrow\RR$ is a smooth function (the Hamiltonian) that generates the dynamics. I will refer to such a triple as a \emph{dynamical system}.

One represents time in a dynamical system as follows. The Hamiltonian $h$ generates a Hamiltonian vector field, which can be written, $H^a=\Omega^{ba}d_bh$. The integral curves $c_t:\RR\rightarrow\PP$ that thread the vector field are defined along a parameter $t$, which is interpreted as time. In dynamical systems, \emph{time} is just what parametrizes the solutions to Hamilton's equations.

Measurable quantities or ``observables'' in a dynamical system (like position and angular momentum) are represented by smooth functions $f:\PP\rightarrow\RR$. One can think of such a function as describing the value $f(x)$ displayed by a measuring device when the system is in the state $x$. This motivates taking a ``time observable'' to be a smooth function $\tau:\PP\rightarrow\RR$ that describes a particular measuring device, like a timer or clock, which measures the passage of time. Although there are many ways to characterize such a function, our main concern will be the following property.

\begin{defi}[timeliness]\label{defi:global-timeliness}
	Let $(\PP,\Omega_{ab},h)$ be a dynamical system. To say that a smooth function $\tau:\PP\rightarrow\RR$ is \emph{timely} means that, if $c_t$ is an integral curve of the Hamiltonian vector field generated by the Hamiltonian $h$, then $\tau(c_t)=\tau(c_0)+t$ for all $t\in\RR$.
\end{defi}

A timely observable $\tau$ behaves like a timer, in the following sense. Suppose a timer initially displays `$0900$.' After a duration of time $t$ passes, the timer adds the quantity $t$ to its display, showing $0900+t$. A globally timely observable captures this by changing its value from $\tau(c_0)$ to $\tau(c_t)=\tau(c_0)+t$.

Notice that in order to satisfy Definition \ref{defi:global-timeliness}, a necessary (but not sufficient) requirement is that $\tau(c_t)$ be a monotonically increasing function of $t$. As a result, a timely observable cannot exist at a stationary point, where the Hamiltonian vector field vanishes and the integral curve $c_t=c_0$ is constant for all times $t$. It also cannot exist in a phase space of finite Lebesgue volume, due to the famous no-go results of Poincar\'e and Zermelo, which under such circumstances prohibit the existence of any function that increases monotonically with time\footnote{I thank Nikola Buri\'{c} and Jos Uffink for pointing this out to me in discussion.}. However, a timely observable can easily be constructed ``locally,'' which is to say for finite amounts of time, in the neighborhood of every non-stationary point. Here is one such construction\footnote{This rough idea was independently suggested by \citet[\S 1]{ArsenovicBuricDavidovicPranovic-2012timeobservables}. It is an immediate consequence of the Darboux theorem.}.

\begin{prop}\label{prop:1}
	Let $(\PP,\Omega_{ab},h)$ be a dynamical system. If the integral curve generated by $h$ through a point $x\in\PP$ is not constant at that point, then there exists a function $\tau:B\rightarrow\RR$  in some neighborhood $B$ of $x$ such that $\tau(c_t) = \tau(c_{t^\prime}) + (t-t^\prime)$ for all $c_t, c_{t^\prime} \in B$. In particular, the timeliness condition $\tau(c_t) = \tau(c_{0}) + t$ holds when $c_0,c_t \in B$.
\end{prop}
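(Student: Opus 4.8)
The plan is to reduce the statement to the Darboux theorem, in the strong form that lets a function with nowhere-vanishing differential be taken as one of the canonical coordinates. The first step is to unpack the hypothesis. The integral curve through $x$ fails to be constant there precisely when the Hamiltonian vector field $H^a=\Omega^{ba}d_b h$ is nonzero at $x$; and since $\Omega_{ab}$ is non-degenerate, $H^a(x)=0$ if and only if $d_a h(x)=0$. So the hypothesis says exactly that $x$ is a regular point of $h$.

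Second, I would apply Darboux's theorem in the form just mentioned: since $d_a h(x)\neq 0$, there is a neighborhood of $x$ carrying canonical coordinates $(q^1,\dots,q^n,p_1,\dots,p_n)$ in which $\Omega_{ab}$ takes its standard constant form and $h=q^1$. Shrink this neighborhood to a coordinate cube $B$ (a product of open intervals), and define $\tau:B\to\RR$ to be $p_1$, the coordinate conjugate to $h$ — with the sign reversed if necessary so that $(H^a d_a\tau)(y)=1$ for all $y\in B$. That this sign can be arranged is just the computation of the canonical Poisson bracket $\{h,p_1\}=\pm 1$, a nonzero constant on $B$.

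Third, I would integrate along the flow. For any integral curve $c_s$ whose image lies in $B$, one has $\frac{d}{ds}\,\tau(c_s)=(H^a d_a\tau)(c_s)=1$, so $s\mapsto\tau(c_s)$ is affine with unit slope. In the chart, $h=q^1$ is a constant of motion and $H^a$ is $\pm(\partial/\partial p_1)^a$, so each integral curve is a straight segment parallel to the $p_1$-axis; because $B$ is a cube, if $c_t$ and $c_{t'}$ both lie in $B$ then so does the entire segment joining them. Integrating the unit-slope relation from $t'$ to $t$ then gives $\tau(c_t)=\tau(c_{t'})+(t-t')$, and the case $t'=0$ is the timeliness condition.

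The single substantive ingredient is the appeal to Darboux with a prescribed coordinate function; everything after that is bookkeeping, and the only point that needs a little care is taking $B$ to be a cube rather than an arbitrary neighborhood, so that the relation holds for every pair of points of $B$ and not merely to first order along the flow. A leaner route that bypasses symplectic geometry entirely is to straighten the non-vanishing vector field $H^a$ directly by the flow-box theorem, obtaining a chart in which $H^a=(\partial/\partial x^1)^a$ and setting $\tau=x^1$; I would note this, but the Darboux version has the advantage of exhibiting $\tau$ as canonically conjugate to $h$, which is convenient for the later discussion of uniqueness.
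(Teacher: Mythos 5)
Your proof is correct, but it takes a different route from the paper's. The paper uses the construction you mention only in your closing remark: it picks a $(2n-1)$-dimensional hypersurface $\mathcal{S}$ through $x$ transverse to the Hamiltonian vector field, invokes existence and uniqueness of solutions to ODEs to ensure each point of a small neighborhood $B$ lies on a unique integral curve, and defines $\tau(y)$ as the time of flight from $\mathcal{S}$ to $y$ along that curve, whence $\tau(c_t)=\tau(c_{t'})+(t-t')$ follows from additivity of flow times. You instead invoke the strong form of Darboux's theorem (Carath\'eodory--Jacobi--Lie) to make $h$ a canonical coordinate $q^1$, take $\tau=\pm p_1$, and integrate $\tfrac{d}{ds}\tau(c_s)=1$. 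The two constructions are essentially dual --- the time-of-flight function is precisely the coordinate conjugate to $h$ in such a chart --- and the paper's own footnote describes the proposition as ``an immediate consequence of the Darboux theorem,'' so your route is one the author endorses. What your version buys is that $\{h,\tau\}=1$ is visible by inspection (anticipating the local-timeliness discussion immediately after the proposition), and your insistence on a coordinate cube is a genuine point of extra care that the paper's choice of $B$ omits. One residual informality is shared by both arguments: if the maximal integral curve leaves the chart and later re-enters $B$ (a recurrent orbit), the identity as literally quantified over all $c_t,c_{t'}\in B$ can fail, so the claim should be read as applying to pairs of times for which the curve remains in $B$ throughout $[t',t]$; your cube controls excursions within the chart but not excursions out of it.
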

\begin{proof}
Since $x$ is not stationary, there is a $(2n-1)$-dimensional hypersurface $\mathcal{S}$ through $x$ that does not contain a tangent vector to the integral curve at $x$. The existence and uniqueness of solutions to ordinary differential equations imply that in some neighborhood $B$ of $x$, each point $y \in B$ is intersected by a unique integral curve $c_t$. So, define a function $\tau:B\rightarrow\RR$ as follows: for each $y\in B$, if $c_t$ is the integral curve passing through $y$ at time $t$, then let $\tau(y)$ be the amount of time required to go from the hypersurface $\mathcal{S}$ to the point $y$ along that curve. Then $\tau$ is our desired function, since,
	\[ \tau(c_t) = \tau(c_{t^\prime}) + (\underbrace{\tau(c_t) - \tau(c_{t^\prime})}_{t-t^\prime}) = \tau(c_{t^\prime}) + (t-t^\prime),\]
for all $c_t,c_{t^\prime} \in B$. \hfill $\Box$
\end{proof}

This construction guarantees the existence of a locally operating clock, which keeps time for events in the neighborhood of a non-stationary point. Of course, $\tau$ need not satisfy timeliness for all times $t\in\RR$. The integral curves $c_t$ may soon leave the region, and it is not generally possible to extend a timely observable to all of phase space. Happily, a locally operating clock is all that we are typically dealing with when we measure time anyway.

Like many local facts, Proposition \ref{prop:1} can be expressed in terms of derivatives and the Poisson bracket. In particular, if $c_t$ is an integral curve of the vector field generated by $h$, and if $\tau$ satisfies $\tau(c_t)=\tau(c_{t^\prime})+(t-t^\prime)$ in some region $B\subseteq\PP$, then applying the definition of the Poisson bracket yields,
	\[ \{h,\tau\} = H^ad_a\tau = \frac{d}{dt}\tau\left(c_t\right)= \frac{d}{dt}(\tau(c_{t^\prime})+t-t^\prime) = 1. \]
Thus, establishing that an observable $\tau$ is timely in a region also establishes that $\{h,\tau\}=1$ in that region. I will call this property \emph{local timeliness}.

If a timely function exists, it is in general not unique. For example, if a dynamical system has a Hamiltonian $h$ and a timely function $\tau$, then $\tilde{\tau}=\tau+h$ is also timely. However, timely functions are still unique up to a natural notion of equivalence, which is characterized by the following.

\begin{prop}\label{prop:uniqueness}
	Let $\tau$ be a timely function. Then $\tilde{\tau}$ is also timely if and only if $\tilde{\tau} = \tau+f$ for some function $f$ that is constant in time, i.e., $f(c_t)=f(c_0)$ for all $t\in\RR$.
\end{prop}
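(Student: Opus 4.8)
The plan is to prove both implications by direct computation, exploiting the single structural fact that the additive ``$+t$'' in the timeliness condition is the \emph{same} for every timely function, and hence drops out when one compares two of them.

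For the ``if'' direction, I would assume $\tilde{\tau}=\tau+f$ with $f$ constant along every integral curve, pick an arbitrary integral curve $c_t$ of the Hamiltonian vector field, and evaluate: using timeliness of $\tau$ and then constancy of $f$,
\[ \tilde{\tau}(c_t) = \tau(c_t)+f(c_t) = \big(\tau(c_0)+t\big)+f(c_0) = \tilde{\tau}(c_0)+t, \]
which is precisely timeliness of $\tilde{\tau}$. For the ``only if'' direction, I would set $f:=\tilde{\tau}-\tau$, which is smooth as a difference of smooth functions, and compute along any integral curve $c_t$ using timeliness of both $\tau$ and $\tilde{\tau}$:
\[ f(c_t) = \tilde{\tau}(c_t)-\tau(c_t) = \big(\tilde{\tau}(c_0)+t\big)-\big(\tau(c_0)+t\big) = \tilde{\tau}(c_0)-\tau(c_0) = f(c_0), \]
so $f$ is constant in time in the sense of the statement, and $\tilde{\tau}=\tau+f$ by construction.

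I do not expect a genuine obstacle here: each direction is a one-line cancellation of the parameter $t$. The only points worth flagging are that smoothness of $f$ is automatic, and that ``constant in time'' should be read exactly as in the statement — constancy along integral curves of the Hamiltonian vector field — which, by the same argument used just before the proposition to pass from $\tau(c_t)=\tau(c_{t'})+(t-t')$ to $\{h,\tau\}=1$, is equivalent to the infinitesimal condition $\{h,f\}=H^a d_a f = 0$ wherever the bracket is defined. This also makes transparent the earlier remark that $\tilde{\tau}=\tau+h$ is timely, since $\{h,h\}=0$.
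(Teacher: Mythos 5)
Your proof is correct and follows essentially the same route as the paper's: the same one-line cancellation of $t$ in each direction, with $f:=\tilde{\tau}-\tau$ in the ``only if'' direction. The added remarks about smoothness of $f$ and the equivalent infinitesimal condition $\{h,f\}=0$ are accurate but not needed.
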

\begin{proof}
The `if' direction follows immediately from the definition,
	\[ \tilde{\tau}(c_t) = \tau(c_t)+f(c_t) = \tau(c_0) + t + f(c_0) = \tilde{\tau}(c_0) + t.\]
The `only if' direction follows by subtracting $\tau(c_t)=\tau(c_0)+t$ from $\tilde{\tau}(c_t) = \tilde{\tau}(c_0) + t$ to get $\tilde{\tau}(c_t) - \tau(c_t) = \tilde{\tau}(c_0) - \tau(c_0)$. Then set $f(c_t) = \tau_2(c_0) - \tau_1(c_0)$ for all $t$. \hfill $\Box$
\end{proof}
This means that, on each dynamical trajectory $c_t$, the values assigned by two timely functions $\tau$ and $\tilde{\tau}$ differ only by a constant $f(c_0)$. Their time derivatives are the same, $d\tau/dt = d\tilde{\tau}/dt$. Thus, on a given trajectory, two timely functions are unique up to a choice of initial time.

\section{Timely observables generate incomplete vector fields}\label{sec:time-observables-generate-incomplete-fields}

Recall that a vector field $F^a$ is \emph{complete} if every maximal integral curve $c_s$ of $F^a$ has the entire real number line as its domain; otherwise, $F^a$ is \emph{incomplete}. The purpose of this section is to illustrate that generating an incomplete vector field is a generic property of timely (and locally timely) observables, whenever the Hamiltonian is bounded from below. The proof is just a geometric analogue of a well-known proof of Pauli's theorem in quantum theory \citep[see, e.g.,][\S 4.3]{butterfield2013time}\footnote{I think David Malament for comments that helped to correctly state this proposition.}.

\begin{prop}\label{prop:2}
Let $(\PP,\Omega_{ab},h)$ be a dynamical system, and let $h$ be bounded from below. If $\tau:\PP\rightarrow\RR$ is a smooth function that satisfies $\{h,\tau\}=1$,
then the Hamiltonian vector field generated by $\tau$ is incomplete.
\end{prop}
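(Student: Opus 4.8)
The plan is to mimic the standard operator-theoretic proof of Pauli's theorem in this geometric setting. The key idea: if the Hamiltonian vector field $T^a = \Omega^{ba}d_b\tau$ generated by $\tau$ were complete, then the flow it generates would be a one-parameter group of symplectomorphisms $\phi_s$, and conjugating $h$ by this flow would produce, for each $s$, a function whose values exhaust a translated copy of the range of $h$. Since $\{h,\tau\}=1$, this conjugation acts on $h$ as a shift, and one can push $h$ below any bound, contradicting half-boundedness.

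Let me lay out the steps. First, suppose for contradiction that $T^a$ is complete, so it generates a flow $\phi_s:\PP\to\PP$ for all $s\in\RR$, and because $T^a$ is a Hamiltonian vector field, each $\phi_s$ is a symplectomorphism. Second, consider the function $h_s := h\circ\phi_s$. Differentiating with respect to $s$ gives $\frac{d}{ds}h_s = (T^a d_a h)\circ\phi_s = \{h,\tau\}\circ\phi_s \cdot(\text{sign})$; here one must be careful with the sign convention, but using $\{h,\tau\} = T^a d_a\tau$... actually the cleaner route is to note $T^a d_a h = \Omega^{ba}(d_b\tau)(d_a h) = -\Omega^{ab}(d_a h)(d_b\tau) = -\{h,\tau\}\cdot(\pm 1)$, i.e. $T^a d_a h$ is the constant $\mp 1$. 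So $\frac{d}{ds}h_s$ is a nonzero constant, hence $h_s = h \mp s$ (composed appropriately along the flow): more precisely, $h(\phi_s(x)) = h(x) - s$ for all $x$ and all $s$, up to an overall sign. Third, fix any point $x_0\in\PP$; then $h(\phi_s(x_0)) = h(x_0) - s \to -\infty$ as $s\to+\infty$, which contradicts the assumption that $h$ is bounded below. Therefore $T^a$ must be incomplete.

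The main obstacle — really the only subtlety — is justifying that a complete Hamiltonian vector field generates a flow by symplectomorphisms and, relatedly, pinning down the sign so that $h$ is driven downward rather than merely shifted. Completeness is exactly what guarantees the local flow extends to a global one-parameter group $\phi_s$ defined for all $s\in\RR$; that each $\phi_s$ preserves $\Omega_{ab}$ is the standard fact that the Lie derivative of the symplectic form along a Hamiltonian vector field vanishes (Cartan's formula plus $d\Omega=0$), though in fact symplecticity is not even needed for the argument above — only that $h\circ\phi_s = h - s$, which follows purely from $T^a d_a h = -1$ and the chain rule applied along integral curves of $T^a$. One should also remark that the hypothesis only requires $\{h,\tau\}=1$ locally in the sense used earlier, but since the contradiction is reached from global completeness of $T^a$, the cleanest statement assumes $\{h,\tau\}=1$ on all of $\PP$, as the proposition does.

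Finally, it is worth noting the symmetry of the argument: by the antisymmetry of the Poisson bracket, $\{h,\tau\}=1$ also gives $\{\tau,h\}=-1$, so the same reasoning shows that if instead $h$ were bounded \emph{above} the conclusion would still follow; only a two-sided bound (finite-volume-type situations) is compatible with completeness, dovetailing with the Poincaré–Zermelo remark in Section \ref{sec:time-observables-and-existence}.
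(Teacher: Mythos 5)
Your proposal is correct and follows essentially the same route as the paper: both compute that the derivative of $h$ along the flow of $T^a=\Omega^{ba}d_b\tau$ is the constant $-1$ (via $\{\tau,h\}=-1$), so that completeness would force $h(\phi_s(x))=h(x)-s$ for all $s\in\RR$ and hence drive $h$ below any bound. The paper phrases this in terms of a single integral curve and an integration rather than the global flow, and the sign ambiguity you worry about is harmless for exactly the reason you note — either sign makes the range of $h$ all of $\RR$.
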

\begin{proof}
Let $\{h,\tau\} = 1$. Then, by the skew-symmetry of the bracket, $\{\tau,h\} = -1$. Let $c_s$ be an integral curve of the Hamiltonian vector field generated by $\tau$, and define $h(s):=h(c_s)$. Then $-1 = \{\tau,h\} = \frac{d}{ds}h(s)$, and so,
\begin{equation*}
h(s) - h(0) = \int_0^s\frac{d}{ds}h(s)ds = -\int_0^s ds = -s.
\end{equation*}
If $\tau$ were to generate a complete Hamiltonian vector field, then $s$ could take any real value, and we would have $h(s) = h(0) - s$ for all $s\in\RR$. This would imply that the range of $h$ is the entire real line, contradicting the assumption that $h$ is half-bounded. Therefore, the vector field generated by $\tau$ is incomplete. \hfill $\Box$
\end{proof}

Since timeliness implies local timeliness, an analogous local statement follows as an immediate corollary.
\begin{coro}
	Let $(\PP,\Omega_{ab},h)$ be a dynamical system, and let $h$ be bounded from below. If  $\tau:\PP\rightarrow\RR$ is a smooth function such that for all integral curves $c_t$ of $h$, $\tau$ satisfies $\tau(c_t) = \tau(c_0) + t$ for all $t\in\RR$, then the Hamiltonian vector field generated by $\tau$ is incomplete. 
\end{coro}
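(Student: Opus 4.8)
The plan is to deduce this corollary directly from Proposition \ref{prop:2}, by observing that the work has already been done: one need only make explicit the passage from ``timely'' to ``locally timely'' (i.e.\ to $\{h,\tau\}=1$ on all of $\PP$) that is sketched informally in the discussion preceding Proposition \ref{prop:uniqueness}, and then invoke Proposition \ref{prop:2}.

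First I would fix an arbitrary point $x\in\PP$ and appeal to the existence theorem for ordinary differential equations. Since the Hamiltonian vector field $H^a=\Omega^{ba}d_bh$ is smooth, there is an integral curve $c_t$ of $H^a$ with $c_0=x$, defined at least for $t$ in some open interval about $0$; by the hypothesis on $\tau$, this curve satisfies $\tau(c_t)=\tau(c_0)+t$ there. I would then differentiate this identity with respect to $t$ and evaluate at $t=0$: the right-hand side has derivative $1$, while the left-hand side has derivative $H^ad_a\tau$ evaluated at $c_0=x$, which is by definition $\{h,\tau\}(x)$. Hence $\{h,\tau\}(x)=1$, and since $x$ was arbitrary, $\{h,\tau\}=1$ throughout $\PP$. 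Applying Proposition \ref{prop:2} to this $\tau$ then gives that the Hamiltonian vector field generated by $\tau$ is incomplete, which is the assertion.

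The only step that calls for any care --- and the closest thing to an obstacle --- is checking that the differentiation at $t=0$ is legitimate, which requires the integral curve of $h$ through $x$ to exist on a neighborhood of $0$; this is guaranteed by the smoothness of $H^a$ (Picard--Lindel\"of), so no completeness hypothesis on $H^a$ itself is needed. Beyond this the argument is pure bookkeeping, which is why the result is stated as an immediate corollary rather than proved afresh.
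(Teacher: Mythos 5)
Your proposal is correct and matches the paper's route exactly: the paper likewise treats the corollary as immediate by noting that timeliness implies local timeliness $\{h,\tau\}=1$ (via the same differentiation of $\tau(c_t)=\tau(c_0)+t$ along integral curves, carried out in the discussion following Proposition~\ref{prop:1}) and then invoking Proposition~\ref{prop:2}. Your added care about needing only local existence of the integral curve near $t=0$ is a sensible, if minor, sharpening of what the paper leaves implicit.
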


Notably, this fact does not prohibit time observables outright when the Hamiltonian is half-bounded. Here is an easy example of a smooth timely function on a manifold with a half-bounded Hamiltonian. Let $h(q,p)=(1/2m)p^2$ be the free particle Hamiltonian, and let us restrict the phase space $\PP$ to the positive-momentum half of the real plane,
	\[ \PP = \RR\times\RR^+ = \{(q,p) : q\in(-\infty,\infty), p\in(0,\infty)\}. \]
Then $\tau(q,p) = mq/p$ is a smooth function that is timely. To check this, apply the facts that $p_t = p_0$ (a constant) and $q_t = p_0t/m + q_0$ to the definition of $\tau$, to get:
	\[ \tau(q_t,p_t) = mq_t/p_t = m(p_0t/m + q_0)/p_0 = \tau(q_0,p_0)+t.\]
So, $\tau$ is timely. By force of Proposition \ref{prop:2}, the Hamiltonian vector field generated by $\tau$ cannot be complete; one can also check explicitly that the integral curves of this vector field are defined only for a half-bounded set of parameter values.

It's remarkable that this example is not available in quantum theory. The corresponding symmetric Hilbert space operator $T=(m/2)(QP^{-1}+P^{-1}Q)$ is not a quantum observable in the conventional sense, in that it is not self-adjoint and admits no self-adjoint extension\footnote{For a more general discussion of this operator, see \citep{BuschEtAl1994}.}. This is a consequence of the fact that the momentum operator $P$ is not self-adjoint when restricted to the positive-momentum half of its spectrum \citep[Ex. 4.2.5]{BlankExnerHavlicek}. These failures can be attributed to Pauli's prohibition on the existence of a timely self-adjoint operator in quantum theory, whenever the Hamiltonian is half-bounded. In the next section, we will see that this prohibition is in fact just a special case of Proposition \ref{prop:2}.

\section{No time observables in quantum theory}\label{sec:no-time-observables}

The quantum analogue of a timely observable is a densely defined self-adjoint operator $T$ satisfying,
	\[ e^{-itH}Te^{itH}\psi=(T+tI)\psi, \]
for all $t\in\RR$ and for all $\psi$ in the domain of $T$. The uniqueness result expressed by Proposition \ref{prop:uniqueness} carries over here as well: if $T$ is timely, then $\tilde{T}$ is also timely if and only if $\tilde{T} = T + F$ for some $F=F(t)$ that is constant in time\footnote{The proof of this exactly follows the steps of Proposition \ref{prop:uniqueness}.}. But in this case, more can be said: by the Stone-von Neumann theorem \citep[Theorem 8.2.4]{BlankExnerHavlicek}, two timely observables $T$ and $\tilde{T}$ are unitarily equivalent, since they both satisfy the canonical commutation relations in Weyl form.

However, Pauli's theorem prohibits the existence of such timely observables in quantum systems with a half-bounded Hamiltonian. This section will develop a general perspective on why that is. In rough sketch, we will observe a sense in which quantum theory is very restrictive as to what it counts as an observable than more general dynamical systems. A quantum observable is conventionally represented by a self-adjoint operator, which always generates the analogue of a complete vector field. Namely, a self-adjoint $A$ always generates a unitary group $U(s)=e^{isA}$ that is defined for all parameter values $s\in\RR$. This fact, together with Proposition \ref{prop:2}, implies the conclusion of Pauli's theorem: when the Hamiltonian is half-bounded, there can be no self-adjoint operator that is timely.

To make this argument precise, we adopt a formalism proposed by \citet{AshtekSchill1999a}, in which quantum mechanics is viewed quite literally a particular class of dynamical system called a K\"ahler manifold. We begin with a few remarks about this formalism to fix notation. For details, the reader is referred to \citet{schilling-dissertation,AshtekSchill1999a}.

\subsection{K\"ahler quantum mechanics}

Suppose we view an ordinary quantum system as a pair $(\HH,e^{-itH})$, where $\HH$ is a separable Hilbert space and $\UU_t=e^{-itH}$ is a strongly continuous unitary group that provides the dynamics. Following \citet{AshtekSchill1999a}, one can view this structure as a particular kind of dynamical system $(\PP,\Omega_{ab},h)$.

We begin by defining $\PP$ to be the set of projective rays of $\HH$, which forms a Hilbert manifold whenever $\HH$ is separable \citep{schilling-dissertation}. But instead of viewing $\PP$ as a complex manifold, we view it as a real manifold equipped with a complex structure $\Jab$, which behaves like multiplication by $i$. Next, we write the inner product in terms of real functions $g$ and $\Omega$ that yield the real and imaginary parts, respectively:
	\[ \Inn{\psi}{\phi} = g(\psi,\phi) + i\Omega(\psi,\phi). \]
It turns out that the function $\Omega$ is a symplectic form, $g$ is a strongly non-degenerate Riemannian metric, and these objects satisfy the relations,
\begin{align*}
	g_{ab} = \Omega_{ac}\tensor{J}{^c}{_b}, && g_{cd}\tensor{J}{^c_a}\tensor{J}{^d_b} = g_{ab}, && \Omega_{ab}\tensor{J}{^a_c}\tensor{J}{^b_d} = \Omega_{cd}.
\end{align*}
Such a structure $(\PP,\Omega_{ab},\Jab,g_{ab})$ is called a \emph{K\"ahler manifold}\footnote{See \citet[\S 5.3]{MarsdenRatiu1999} for an introduction to K\"ahler manifolds, and \citet{Ballma2006a} for a book-length treatment.}. The following lemma summarizes two important properties of a K\"ahler manifold constructed from a projective Hilbert space, which we will make use of shortly.

\begin{lemm}\label{lem:manifold}
	Let $(\PP,\Omega_{ab},\Jab,g_{ab})$ be a K\"ahler manifold, in which $\PP$ is a real Hilbert manifold corresponding to a separable Hilbert space, $g_{ab}$ is the Riemannian metric defined by the real part of the Hilbert space inner product, and $\Omega_{ab}$ is the symplectic form defined by the imaginary part of the inner product. Then,
	\begin{enumerate}
		\item[(a)] $\PP$ is a topologically connected, and
		\item[(b)] $\PP$ is geodesically complete with respect to $g_{ab}$.
	\end{enumerate}
\end{lemm}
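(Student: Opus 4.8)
The plan is to prove (a) and (b) separately, in both cases exploiting the description of $\PP$ as the space of rays of the unit sphere $S(\HH) = \{\psi \in \HH : \|\psi\| = 1\}$, together with the quotient map $\pi : S(\HH) \to \PP$ sending a unit vector to its ray.

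For (a), it is enough to note that $\pi$ is continuous and surjective, so connectedness of $\PP$ follows from connectedness of $S(\HH)$ (the continuous image of a connected space is connected). If $\dim_{\mathbb{C}}\HH = 1$ then $\PP$ is a single point and there is nothing to prove; otherwise any two unit vectors $\psi,\phi$ are joined in $S(\HH)$ by the renormalized straight-line path $s \mapsto \big((1-s)\psi + s\phi\big)/\|(1-s)\psi + s\phi\|$, which is well defined unless $\phi = -\psi$, and in that remaining case one routes through a unit vector orthogonal to $\psi$, which exists because $\dim_{\mathbb{C}}\HH \ge 2$. Hence $S(\HH)$, and therefore $\PP$, is path-connected.

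For (b), the essential observation is that one cannot shortcut the argument via the Hopf--Rinow theorem --- $\PP$ is in general infinite dimensional, and geodesic completeness does not follow from metric completeness in that setting --- so the geodesics must instead be exhibited explicitly. Equipping $S(\HH)$ with the Riemannian metric induced by the real part of $\Inn{\cdot}{\cdot}$, the map $\pi$ is a Riemannian submersion onto $(\PP,g_{ab})$, with horizontal subspace at $\psi$ equal to $\{\chi \in \HH : \Inn{\psi}{\chi} = 0\}$ \citep{AshtekSchill1999a,schilling-dissertation}. Given $[\psi] \in \PP$ and a tangent vector there, lift it to a horizontal $\chi$ at a unit representative $\psi$; if $\chi = 0$ the geodesic is constant, and if $\chi \ne 0$ then $s \mapsto \cos(\|\chi\|s)\,\psi + \sin(\|\chi\|s)\,\chi/\|\chi\|$ is a great-circle geodesic of $S(\HH)$ realizing the given initial data and remaining horizontal for every $s$, so its image under $\pi$ is a geodesic of $(\PP,g_{ab})$ with the prescribed initial point and velocity and with domain all of $\RR$. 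Since $g_{ab}$ is strongly non-degenerate it admits a Levi-Civita connection, so geodesics are uniquely determined by their initial data; hence every maximal geodesic of $\PP$ is of the above form and $\PP$ is geodesically complete.

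The only delicate step is the one just flagged: the proof of (b) must route through the explicit great-circle geodesics rather than through Hopf--Rinow, and it relies on O'Neill's submersion formulas and the horizontal-lift construction remaining valid for the Hilbert manifolds in play --- which they do precisely because the metric is strong. I would state this and cite \citet{AshtekSchill1999a} and \citet{schilling-dissertation} rather than reproduce the verification; alternatively, one can simply invoke the fact that a projective Hilbert space is a Hilbert--Riemannian symmetric space, hence geodesically complete.
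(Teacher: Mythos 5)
Your proof is correct, but note that the paper does not actually prove this lemma at all: it simply cites \citet[Theorem II.2]{AshtekSchill1999a}, so what you have written is a genuine self-contained argument where the paper defers to the literature. Your route is the standard one and it holds up. Part (a) is unproblematic: $\PP$ is the continuous image of the connected sphere $S(\HH)$ (and in fact the antipodal worry is moot at the level of rays, since $-\psi$ and $\psi$ determine the same point of $\PP$, though your detour through an orthogonal vector is harmless). Part (b) is where you add real value over the paper: your observation that Hopf--Rinow cannot be invoked to get geodesic completeness in infinite dimensions is exactly right, and the explicit great-circle lift is the correct substitute. The computation checks out --- with $\Inn{\psi}{\chi}=0$ the curve $s\mapsto\cos(\|\chi\|s)\,\psi+\sin(\|\chi\|s)\,\chi/\|\chi\|$ stays in $S(\HH)$, its velocity remains complex-orthogonal to the position, so the curve is horizontal for all $s$, and O'Neill's lemma (valid for strong metrics on Hilbert manifolds) projects it to a geodesic of $(\PP,g_{ab})$ defined on all of $\RR$; uniqueness of geodesics for the Levi-Civita connection of the strong metric then finishes the argument. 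The only places where you lean on unproved inputs are the identification of the horizontal distribution and the submersion lemma itself, and you flag both with appropriate citations, which is acceptable. One small remark: the paper's own proof of its Lemma 3 uses the ``geodesically complete $\Rightarrow$ metrically complete'' half of Hopf--Rinow on this same infinite-dimensional $\PP$, so your caution about which parts of Hopf--Rinow survive in infinite dimensions is relevant beyond this lemma and would be worth stating once for the whole paper.
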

\begin{proof}
	\citep[Theorem II.2]{AshtekSchill1999a}
\end{proof}

To have a dynamical system in the sense of the previous sections, we must now know the nature of observables. The natural choice is to define a function $f:\PP\rightarrow\RR$ for each self-adjoint operator $F$ on the Hilbert space, given by the amplitude,
	\[ f(\psi) = \Inn{\psi}{F\psi}\]
defined on unit vectors $\psi$. On the one hand, this function captures the probabilistic properties of the self-adjoint operator $F$ that make it an observable. But there is more: the function $f$ generates a Hamiltonian vector field on $\PP$ by the usual prescription $F^a:=\Omega^{ba}d_bf$. This vector field turns out to correspond precisely to the unitary group $U(s)=e^{isF}$ generated by $F$ on the underlying Hilbert space \citep[\S II.A.1]{AshtekSchill1999a}. In particular, the quantum Hamiltonian $H$ generates dynamical trajectories that correspond to precisely the Hamiltonian vector field generated by $h$. In this sense, a quantum system $(\HH, e^{-itH})$ really can be viewed as an example of a dynamical system $(\PP,\Omega_{ab},h)$.

We will make use of one further property of quantum observables, when viewed as functions on a manifold $\PP$.

\begin{lemm}\label{lem:killing}
	Let $(\PP,\Omega_{ab},\Jab,g_{ab})$ be a be the K\"ahler manifold constructed from a separable Hilbert space $\HH$, as in Lemma \ref{lem:manifold}. Then the statement,
	\[ f(\psi) = \Inn{\psi}{F\psi} \]
for some densely defined self-adjoint operator $F$ holds if and only if the vector field $F^a :=\Omega^{ba}d_bf$ is a Killing field, $\nabla_a F_b + \nabla_b F_a = \mathbf{0}$, where $\nabla_a$ is the (unique) derivative operator associated with $g_{ab}$.
\end{lemm}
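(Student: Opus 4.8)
The plan is to prove the two implications separately, using the Ashtekar--Schilling dictionary already cited above — which links a densely defined self-adjoint $F$, its expectation-value function $f(\psi)=\Inn{\psi}{F\psi}$, and the unitary group $\UU_s=e^{isF}$ — together with the geometric facts recorded in Lemma~\ref{lem:manifold}. For the forward direction, assume $f(\psi)=\Inn{\psi}{F\psi}$ for a densely defined self-adjoint $F$. Then $F^a=\Omega^{ba}d_bf$ generates the flow on $\PP$ induced by $\UU_s$. Each $\UU_s$, being unitary, preserves the Hilbert space inner product and hence preserves both its real part $g_{ab}$ and its imaginary part $\Omega_{ab}$; in particular the flow of $F^a$ acts by isometries of $g_{ab}$. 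A vector field whose flow is isometric is a Killing field, so $\nabla_aF_b+\nabla_bF_a=\mathbf{0}$.

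For the converse, suppose $F^a=\Omega^{ba}d_bf$ is Killing. The first step is to observe that $F^a$ is complete: it is a Killing field on a manifold that is geodesically complete by Lemma~\ref{lem:manifold}(b), and Killing fields on geodesically complete Riemannian manifolds are always complete. Hence $F^a$ integrates to a genuine one-parameter group of diffeomorphisms $\phi_s$. Being Hamiltonian, $\phi_s$ preserves $\Omega_{ab}$; being the flow of a Killing field, it preserves $g_{ab}$; and since $\Jab$ is fixed algebraically by $g_{ab}$ and $\Omega_{ab}$ through the Kähler relations, $\phi_s$ preserves $\Jab$ as well. So $\{\phi_s\}$ is a one-parameter group of automorphisms of the entire Kähler structure of the projective Hilbert space $\PP$.

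The second step pulls this group back to $\HH$. By Wigner's theorem every Kähler automorphism of $\PP$ is induced by a unitary or antiunitary operator; since $\phi_0$ is the identity and $s\mapsto\phi_s$ is smooth, each $\phi_s$ lies in the identity component and so is induced by a \emph{unitary} $U(s)$, and (using connectedness, Lemma~\ref{lem:manifold}(a)) the phases can be chosen so that $s\mapsto U(s)$ is a strongly continuous projective one-parameter unitary group; as $\RR$ admits no nontrivial central extension, this lifts to an honest one-parameter unitary group, whence Stone's theorem yields a densely defined self-adjoint $F$ with $U(s)=e^{isF}$. Reading the Ashtekar--Schilling dictionary backwards, the Hamiltonian vector field of $f_F(\psi):=\Inn{\psi}{F\psi}$ is precisely $F^a$; since $\PP$ is connected, $f$ and $f_F$ can generate the same Hamiltonian vector field only if $f-f_F$ equals a constant $c$, and replacing $F$ by $F+cI$ (still self-adjoint) absorbs it, giving $f(\psi)=\Inn{\psi}{F\psi}$.

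I expect the pull-back step to be the main obstacle: making the appeal to Wigner's theorem precise in the infinite-dimensional Hilbert-manifold setting, and in particular checking that smoothness of $\phi_s$ forces strong continuity of the lifted group $U(s)$ so that Stone's theorem applies, and that the projective cocycle really is trivializable along $\RR$. The remaining ingredients — completeness of Killing fields on geodesically complete manifolds, preservation of $\Jab$, and the constant-shift bookkeeping — are routine once that bridge is in place.
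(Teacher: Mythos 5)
Your proposal follows essentially the same route as the paper, which cites Corollary 1 of Ashtekar and Schilling for the full proof and then sketches exactly your argument: a Hamiltonian vector field that is also Killing preserves both the real and imaginary parts of the inner product, hence is implemented by a one-parameter unitary group whose self-adjoint generator is supplied by Stone's theorem, with the converse argued by reversing the dictionary. The technical obstacles you flag (the Wigner-type lifting and strong continuity in the infinite-dimensional setting) are precisely what the paper delegates to the cited reference, so your reconstruction is faithful to the intended proof.
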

\begin{proof}
\citep[Corollary 1]{AshtekSchill1999a}
\end{proof}
It will be helpful in what follows to sketch what underlies this fact; for a complete proof, see \citep{schilling-dissertation}. The metric and the symplectic forms are defined by the real and imaginary parts of the inner product, respectively. Thus, if a vector field is both Hamiltonian and Killing, then it preserves both parts of the inner product, and is thus implemented by a set of unitary transformations. Because of the Killing condition, these transformations form a one-parameter group\footnote{This fact, as well as a detailed discussion of Killing vector fields is given by \citet[Appendix C.2-3]{WaldR:1984}.}, and are therefore generated by a self-adjoint operator according to Stone's theorem. The converse may be argued similarly. Therefore, what identifies \emph{quantum} observables (corresponding to self-adjoint operators) from among the many functions on phase space is that they generate a Hamiltonian vector field that is Killing.

Let me summarize the observations of this section that are most essential for the next. An ordinary quantum system $(\HH, e^{-itH})$ can be identified with a dynamical system $(\PP,\Omega_{ab},h)$ equipped with an additional Riemannian metric $g_{ab}$ and a complex structure $\Jab$, which together form a K\"ahler manifold. The manifold $\PP$ is connected and geodesically complete with respect to the metric $g_{ab}$. Each self-adjoint operator $F$ on the original Hilbert space $\HH$ corresponds to a smooth function $f:\PP\rightarrow\RR$ on the manifold $\PP$, which generates a Hamiltonian vector field that is Killing.

\subsection{No quantum time observables}

We have already observed that a self-adjoint operator $F$ always generates a unitary group $U(s)=e^{isF}$ that is ``complete'' in that it is defined for all parameter values $s\in\RR$. Viewing quantum theory as a special class of dynamical systems, this corresponds to the following elementary fact.

\begin{lemm}\label{lemm:complete}
	If $F^a$ is a Killing field on a connected and geodesically complete Riemannian manifold, then $F^a$ is complete.
\end{lemm}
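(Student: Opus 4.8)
The plan is to reduce completeness of $F^a$ to a length estimate on its own integral curves, and then invoke geodesic completeness through the standard ``no escape in finite time'' argument for integral curves. The one fact that makes Killing fields special here is that the speed of $F^a$ is constant along its own integral curves: if $c_s$ is an integral curve of $F^a$, then $\frac{d}{ds}\bigl(g_{ab}F^aF^b\bigr)=F^c\nabla_c\bigl(g_{ab}F^aF^b\bigr)=2F^aF^b\nabla_aF_b$, and this vanishes because the Killing equation $\nabla_aF_b+\nabla_bF_a=\mathbf{0}$ says precisely that $\nabla_aF_b$ is antisymmetric while $F^aF^b$ is symmetric. So along $c_s$ the quantity $\sqrt{g_{ab}F^aF^b}$ equals a constant $\ell\ge 0$ fixed by the starting point.

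Next I would suppose, for contradiction, that some maximal integral curve $c_s$ of $F^a$ has a domain bounded above, say $[0,b)$ with $b<\infty$ (a domain bounded below is handled identically). Because the speed is the constant $\ell$, the Riemannian arc length of $c$ between parameters $s$ and $s'$ is $\ell|s-s'|$, hence $d(c_s,c_{s'})\le\ell|s-s'|$ for the distance $d$ induced by $g_{ab}$. Thus $s\mapsto c_s$ is Lipschitz, so it is Cauchy as $s\to b^-$, and the limit $q:=\lim_{s\to b^-}c_s$ exists in $\PP$ as soon as we know $\PP$ is complete as a metric space.

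Finally I would invoke local existence and uniqueness for the flow of the smooth field $F^a$ near $q$: there is $\varepsilon>0$ and a flow of $F^a$ on $(-\varepsilon,\varepsilon)$ through every point of some neighbourhood of $q$. Applying this at $c_{s_0}$ for $s_0$ close enough to $b$ and gluing via uniqueness extends $c_s$ to a strictly larger parameter interval, contradicting maximality. Hence every maximal integral curve of $F^a$ has domain $\RR$, i.e., $F^a$ is complete. Connectedness of $\PP$ plays no role in this particular argument.

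The step I expect to be delicate is exactly the passage from geodesic completeness to metric completeness, hence to the existence of the endpoint limit $q$. In finite dimensions it is immediate: a finite-length curve lies in a closed metric ball, which is compact by Hopf--Rinow, and an integral curve confined to a compact set always extends. In the infinite-dimensional Hilbert-manifold setting relevant here, closed balls need not be compact and geodesic completeness does not automatically deliver metric completeness, so the argument must instead route through the Lipschitz/Cauchy estimate above together with metric completeness directly; one should therefore verify (or simply restrict to the case, which is the one that matters for Lemma~\ref{lem:manifold}) that the projective Hilbert manifold is complete as a metric space, after which the remaining steps go through verbatim.
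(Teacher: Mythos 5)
Your argument follows essentially the same route as the paper's proof: constancy of $g_{ab}F^aF^b$ along the flow of a Killing field (you derive it algebraically from the antisymmetry of $\nabla_aF_b$, the paper from the flow being isometric---both are fine), the resulting Lipschitz/Cauchy estimate on a maximal integral curve as the parameter approaches a finite endpoint of its domain, convergence to a limit point, and extension past that point via the local flow theorem and uniqueness. The one place you genuinely sharpen the argument is the step from geodesic completeness to convergence of the Cauchy sequence: the paper invokes the Hopf--Rinow theorem in its finite-dimensional form, but the $\PP$ of interest is an infinite-dimensional Hilbert manifold, where Hopf--Rinow is unavailable and geodesic completeness does not automatically yield metric completeness. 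Your proposed repair---verifying metric completeness of the projective Hilbert manifold directly, which does hold for the Fubini--Study metric---is the correct way to close that gap, and you are also right that connectedness plays no role in this particular argument.
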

\begin{proof}
Since $F^a$ is a Killing field, $F^aF_a=k^2$ is constant along its integral curves. (The integral curves of a Killing field consist of isometries $\varphi$, which preserve inner products: $g_{ab}(\varphi_*\xi^a)(\varphi_*\rho^b)=(\varphi^*g_{ab})\xi^a\rho^b=g_{ab}\xi^a\rho^b$. Thus in particular $\varphi$ preserves the inner product of $F^a$ with itself.) Choose any point $p\in\PP$. If $F^a=\mathbf{0}$ at $p$, then its integral curve is the trivial curve that just sits at $p$, and so the domain of that curve is $\RR$. If $F^a\neq\mathbf{0}$, then there is a segment $\varphi:(r,s)\rightarrow\PP$ of the integral curve through $p$ of finite length,
\begin{equation*}
	\int_{r}^{s}(F^aF_a)^{1/2}dt = k(s-r)>0.
\end{equation*}
This implies that if $r_1$ is the midpoint of $(r,s)$, $r_2$ is the midpoint of $(r_1,s)$, and so on, then the sequence $\varphi(r_1),\varphi(r_2),\varphi(r_3),\dots$ is Cauchy. But the Hopf-Rinow theorem guarantees that every Cauchy sequence on a geodesically complete Riemannian manifold converges \citep[Theorem 6.13]{Lee-IntroToRiemannGeom}. So, our sequence converges to $\varphi(s)$. Therefore, the domain of the finite segment $\varphi:(r,s)\rightarrow\PP$ extends to $[r,s]$, and thus by the smoothness of $\varphi$ it extends to some open neighborhood of $[r,s]$. Continuing in this way, it follows that the domain of $\varphi$ is $\RR$. \hfill $\Box$
\end{proof}

We may now combine the lemmas to produce a geometric expression of Pauli's theorem in quantum theory. Lemma \ref{lem:killing} showed that the observables of quantum theory correspond to smooth functions on a dynamical system $(\PP,\Omega_{ab},h)$ that generate Killing fields. The remaining lemmas immediately imply that these quantum observables cannot be timely, when the Hamiltonian $h$ is half-bounded.

\begin{prop}[Pauli's Theorem]\label{prop:3}
	Let $(\PP,\Omega_{ab},\Jab,g_{ab})$ be a K\"ahler manifold as in Lemma \ref{lem:manifold}. If $h$ is a smooth half-bounded function, and if $\tau$ is a smooth function generating a Hamiltonian vector field that is Killing, then $\{h,\tau\}\neq1$.
\end{prop}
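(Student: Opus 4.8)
The plan is to derive a contradiction by assembling the four results already established. Suppose, for contradiction, that $\{h,\tau\}=1$ on the K\"ahler manifold $(\PP,\Omega_{ab},\Jab,g_{ab})$. Forgetting the metric $g_{ab}$ and the complex structure $\Jab$ but retaining the Hamiltonian $h$, what remains is an ordinary dynamical system $(\PP,\Omega_{ab},h)$ in the sense of Section \ref{sec:time-observables-and-existence}, and $h$ is half-bounded by hypothesis. So Proposition \ref{prop:2} applies directly: the Hamiltonian vector field $\tau^a:=\Omega^{ba}d_b\tau$ generated by $\tau$ is incomplete.

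Next I would bring in the structure specific to the K\"ahler setting. The hypothesis on $\tau$ is precisely that $\tau^a$ is a Killing field for $g_{ab}$, and Lemma \ref{lem:manifold} tells us that $\PP$ is connected and geodesically complete with respect to $g_{ab}$. These are exactly the hypotheses of Lemma \ref{lemm:complete}, whose conclusion is that $\tau^a$ is complete. Since $\tau^a$ cannot be both incomplete and complete, the assumption $\{h,\tau\}=1$ is untenable, and hence $\{h,\tau\}\neq1$. Note that Lemma \ref{lem:killing} is not needed for this bare statement; it enters only when one wishes to read the hypothesis ``$\tau$ generates a Killing Hamiltonian vector field'' as the statement that $\tau$ arises from a densely defined self-adjoint operator $T$ via $\tau(\psi)=\Inn{\psi}{T\psi}$, and thereby recover the operator form of Pauli's theorem.

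I do not anticipate a genuinely hard step, since the substance is carried entirely by the lemmas and the proof is essentially an assembly; the only thing worth checking is that the infinite-dimensionality of the Hilbert manifold $\PP$ causes no trouble. It does not: the proof of Proposition \ref{prop:2} is a one-dimensional computation along an integral curve of $\tau^a$ combined with the fundamental theorem of calculus, and is insensitive to $\dim\PP$; and Lemma \ref{lemm:complete} is already phrased for Riemannian manifolds of arbitrary dimension, its Hopf--Rinow/Cauchy-sequence argument going through unchanged on a Hilbert manifold. One might also pause over whether $\tau^a=\Omega^{ba}d_b\tau$ is well-defined here, but the strong non-degeneracy of $\Omega_{ab}$ recorded in the construction of the K\"ahler manifold takes care of that. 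So the pieces fit together with no friction.
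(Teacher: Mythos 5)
Your proof is correct and follows the same route as the paper: combine Lemma \ref{lem:manifold} and Lemma \ref{lemm:complete} to conclude that the Killing field generated by $\tau$ is complete, then invoke Proposition \ref{prop:2} to rule out $\{h,\tau\}=1$. Your observation that Lemma \ref{lem:killing} is not needed for the bare geometric statement also matches the paper, which likewise omits it from this proof.
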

\begin{proof}
By Lemma \ref{lem:manifold}, $\PP$ is connected and geodesically complete with respect to $g_{ab}$. So, if the Hamiltonian vector field $T^a=\Omega^{ba}d_b\tau$ is Killing, then it is also complete by Lemma \ref{lemm:complete}. Proposition \ref{prop:2} thus guarantees that no such function $\tau$ exists satisfying $\{h,\tau\}=1$. \hfill $\Box$
\end{proof}
Since timeliness implies local timeliness, it follows that no $\tau$ can satisfy $\tau(c_t)=\tau(c_0)+t$ when $h$ is half-bounded either. Thus, restricting a dynamical system to the observables of orthodox quantum theory precludes the possibility of a time observable.

\section{Time observables in extensions of quantum theory}\label{sec:extensions-of-quantum-theory}

A half-bounded Hamiltonian precludes timely observables among the impoverished class of functions characterizing quantum observables. But we have seen that timely functions can still appear in more general dynamical systems. This suggests that the lack of time observables is a somewhat fragile aspect of quantum theory, which can be regained in modest extensions of the theory. Let me briefly comment on two such extensions, and provide examples to illustrate how they can eschew Pauli's theorem.

\subsection{Maximal symmetric operators} Suppose we retain the convention that a quantum observable is a linear operator $A$ that is \emph{symmetric}, in that $AA^*\psi=A^*A\psi$ for all $\psi$ in the domain of $A$. This assures that $A$ has a real spectrum. But suppose we give up the additional convention that $A$ be \emph{self-adjoint}, which requires also that $A$ and $A^*$ have the same domain\footnote{In finite-dimensional Hilbert spaces, an operator is symmetric if and only if it is self-adjoint (in which case it is often called \emph{Hermitian}). But this equivalence fails for infinite-dimensional Hilbert spaces.}. A symmetric operator that does not extend to any self-adjoint operator is called \emph{maximal symmetric}.

Stone's theorem assures us that the set of symmetries $U_s=e^{isA}$ generated by a self-adjoint operator $A$ is defined for every real number $s\in\RR$ \cite[Theorem 5.9.1]{BlankExnerHavlicek}. In a general dynamical system, this is analogous to the set of integral curves $c_s$ being defined for all $s\in\RR$, and thus by a complete vector field. However, \citet{cooper1947,cooper1948symmetric} showed that maximal symmetric operators lack this assurance\footnote{Thanks to Thomas Pashby for pointing this out to me.}, instead allowing for the analogue of an incomplete vector field. In particular, call an operator $U$ an \emph{isometry} if it is unitary on a closed subspace $V\subseteq\HH$; thus, an isometry does not necessarily have an inverse with a dense domain. Then we have:

\begin{itemize}
	\item \emph{Stone's Theorem for Maximal Symmetric Operators.} A maximal symmetric operator $A$ generates a set of operators $U_s$ that forms a strongly continuous set of isometries for all $s\geq0$ or for all $s\leq0$, but not for both\footnote{In particular, a maximal symmetric operator is associated with a positive operator-valued measure $\Delta\mapsto E_\Delta$ on Borel subsets of $\RR$, which allows one to define $U_s := \int_\RR e^{i\lambda s}dE_\lambda$ for all $s\in\RR$. \citet{cooper1948symmetric} showed that this set forms a strongly continuous semigroup (for which inverses are not assured) of isometries for either the positive or the negative values of $s$.}.
\end{itemize}

This feature of symmetric operators lifts the requirement of a complete vector field, thereby opening the door for timely observables. One can see just where the breakdown occurs in terms of K\"ahler quantum theory. Let $\tau(\psi)=\Inn{\psi}{T\psi}$ be the smooth function corresponding to a symmetric operator $T$. The Hamiltonian flow it generates corresponds to a set of operators that preserve the Hilbert space inner product, and therefore preserve the metric. But since these isometries do not form a group defined for all real parameter values, the corresponding Hamiltonian vector field is not Killing\footnote{See \citet[Appendix C.2-3]{WaldR:1984}.}. Thus, Lemma \ref{lem:killing} does not apply, and Pauli's theorem does not go through.

An example is provided by the positive-momentum free particle discussed in Section \ref{sec:time-observables-generate-incomplete-fields}. The operator $T=(m/2)(QP^{-1}+P^{-1}Q)$ is permitted to be timely because it is maximal symmetric. The vector field it generates is incomplete, and so Pauli's prohibition on time observables is escaped. Similar examples have been studied by \citet{BuschEtAl1994}. We now have a more general perspective on why such time observables are possible.

\subsection{Weinberg functions} A second, more general route to time observables begins with the geometric perspective. Suppose we are a even more inclusive with our observables, by dropping not only the requirement that an observable function generate a Killing field (as with self-adjoint operators), but also that it even generate a set of isometries (as with symmetric operators).  Suppose we allow any smooth function that generates a vector fields that covers the entire phase space. This class of functions has been studied by \citet[\S III.A]{AshtekSchill1999a}, who show that they characterize a class of extensions of quantum theory proposed by \citet{Weinberg1989a}. For this reason, they refer to these functions as \emph{Weinberg functions}.

Unlike orthodox quantum observables, the Weinberg functions can quite easily be timely. To illustrate, consider the following example, which is due to John D. Norton\footnote{Private communication.}. Let $\PP=\RR^2$, with a Cartesian coordinate system $(q,p)$ and the standard symplectic form, together with a (half-bounded) Hamiltonian $h(q,p)=e^{p}$. The integral curves generated by $h$ can be written $(q_t,p_t)=(e^{p_0} t + q_0,p_0)$ for an arbitrary initial point $(q_0,p_0)$. In this system, the smooth function $\tau(q,p)=q/e^p$ is timely:
\begin{equation*}
	\tau(q_t,p_t) = q_t/e^{p_t} = (e^{p_0} t + q_0)/e^{p_0} = q_0/e^{p_0} + t = \tau(q_0,p_0)+t.
\end{equation*}
The Hamiltonian vector field generated by the timely function $\tau$ has integral curves given\footnote{Check: $dq/ds = \partial \tau/\partial p = -q/e^p$ and $dp/ds = -\partial \tau/\partial q = -1/e^p$. One can easily see by differentiation that these equations are satisfied by $q_s=q_0(1-s/e^{p_0})$ and $p_s=\log(e^{p_0}-s)$.} by $q_s=q_0(1-s/e^{p_0})$ and $p_s=\log(e^{p_0}-s)$. Thus, the vector field tangent to these curves is incomplete, because the curve with the initial point $(q_0,p_0)=(0,0)$ cannot be extended beyond $s=1$ where $p(s)=\log(1-s)$ becomes undefined. However, it is smooth and defined on the entire manifold, and therefore counts as a Weinberg function according to the definition of \citeauthor{AshtekSchill1999a}.

\section{Conclusion}

The time observable question can be made precise a property we have called ``timeliness,'' which describes one simple feature of the behavior of clocks and timers. Once we have specified what it means to be a timely observable, the language of dynamical systems provides a general perspective from which to discuss their existence. We began by observing a local sense in which timely observables are guaranteed to exist in all dynamical systems. We then showed that a global timely observable can only exist if it generates an incomplete Hamiltonian vector field. This property is not possible among the observables among the observables of quantum theory, when the Hamiltonian is half-bounded. We made that fact precise in the geometric formulation of quantum theory, in which a quantum system is viewed as a particular kind of dynamical system that is in addition a K\"ahler manifold. The result is a novel proof of Pauli's theorem, visible now as a corollary to a more general fact about dynamical systems.

From this perspective, the quantum prohibition on time observables does not appear to be such a permanent feature of the description of the physical world. Time observables can be regained in many more general classes of observables on K\"ahler manifolds, which still retain other important aspects of quantum theory. So, insofar as these more general dynamical systems provide a plausible framework for exploring new physics, the prospects for regaining time observables as we go beyond quantum theory may be more promising than the Pauli's theorem by itself suggests.

\section*{Acknowledgements}

Thanks to Nikola Buri\'{c}, Jeremy Butterfield, David Malament, John D. Norton, Nicholas Teh, and Jos Uffink for comments that led to improvements, and especially to Thomas Pashby for many invaluable discussions.



\begingroup\raggedright\endgroup 
\end{document}